\newtheorem{rem}{Remark}
\newtheorem{cor}{Corollary}
\newtheorem{theo}{Theorem}
\newtheorem{conj}{Conjecture}
\newtheorem{claim}{Claim}
\def\H{\mathcal{H}}
\def\K{\mathcal{K}}
\def\G{\mathcal{G}}
\title{Packing of Rigid Spanning Subgraphs and Spanning Trees}
\author{Joseph Cheriyan \\ Olivier Durand de Gevigney \\ Zolt\'an Szigeti}
\begin{document}
\maketitle

\begin{abstract} 
We prove that every $(6k+2\ell,2k)$-connected simple graph contains $k$ rigid and $\ell$ connected edge-disjoint spanning subgraphs. This implies a theorem of Jackson and Jord\'an \cite{Jackson_Jordan2009} and a theorem of Jord\'an \cite{Jordan_2005} on packing of rigid spanning subgraphs. Both these results are generalizations of the classical result of Lov\'asz and Yemini \cite{Lovasz_Yemini} saying that every $6$-connected graph is rigid for which our approach provides a transparent proof.
Our result also gives two improved upper bounds on the connectivity of graphs that have interesting properties: (1) every $8$-connected graph packs a spanning tree and a $2$-connected spanning subgraph; (2) every $14$-connected graph has a $2$-connected orientation.
\end{abstract}

\section{Definitions}

Let $G=(V,E)$ be a graph. 
We will use the following connectivity concepts. 
$G$ is called {\bf connected} if for every  pair $u,v$ of vertices there is a path from $u$ to $v$ in $G$.
$G$ is called {\bf $k$-edge-connected} if $G-F$ is connected for all $F\subseteq E$ with $|F|\leq k-1$.
$G$ is called {\bf $k$-connected} if $|V|>k$ and $G-X$ is  connected for all $X\subset V$ with $|X|\leq k-1$.
For a pair of positive integers $(p,q)$, $G$ is called {\boldmath $(p,q)$}-\textbf{connected} if $G-X$ is $(p-q|X|)$-edge-connected for all $X \subset V$.
By Menger theorem, $G$ is $(p,q)$-connected if and only if
for every pair of disjoint subsets $X,Y$ of $V$ such that $Y\neq\emptyset , X \cup Y \neq V$, 
\begin{equation}
d_{G-X}(Y)\geq p -q|X|.
\end{equation}
For a better understanding we mention that $G$ is $(6,2)$-connected if $G$ is $6$-edge-connected, $G-v$  is $4$-edge-connected for all $v\in V$ and $G-u-v$  is $2$-edge-connected for all $u,v\in V.$ 
It follows from the definitions that $k$-edge-connectivity is equivalent to $(k,k)$-connectivity. Moreover, since loops and parallel edges do not play any role in vertex connectivity, every $k$-connected graph contains a $(k,1)$-connected simple spanning subgraph.
Note also that $(k,1)$-connectivity implies $(k,q)$-connectivity for all $q \geq 1$. (Remark that this connectivity concept is (very slightly) different from the one introduced by Kaneko and Ota \cite{Kaneko_Ota2000} since $p$ is not required to be a multiple of $q$.)
\medskip

Let $D=(V,A)$ be a directed graph. $D$  is called {\bf strongly connected} if for every ordered pair $(u,v)\in V\times V$ of vertices there is a directed path from $u$ to $v$ in $D$.  $D$ is called {\bf $k$-arc-connected} if $G-F$
is  strongly connected for all $F\subseteq A$ with $|F|\leq k-1$. $D$ is called {\bf $k$-connected} if $|V|>k$ and $G-X$ is  strongly connected for all $X\subset V$ with $|X|\leq k-1$.
\medskip

For a set $X$ of vertices and  a set $F$ of edges, denote {\boldmath $G_F$} the subgraph of $G$ on vertex set $V$ and edge set $F,$ that is $G_F=(V,F)$ and {\boldmath $E(X)$} the set of edges of $G$ induced by $X$. 
Denote {\boldmath $\mathcal{R}(G)$} the \textbf{rigidity matroid} of $G$ on ground-set $E$ with rank function {\boldmath  $r_{\mathcal{R}}$} (for a definition we refer the reader to \cite{Lovasz_Yemini}). For $F \subseteq E$, by a theorem of Lov\'asz and Yemini \cite{Lovasz_Yemini},
\begin{equation} \label{rr}
r_{\mathcal{R}}(F) = \min \sum_{X \in \H} (2|X| - 3),
\end{equation}
where the minimum is taken over all collections $\H$ of subsets of $V$ such that $\{E(X) \cap F, X \in \H\}$ partitions $F$.
\begin{rem}\label{Xconnected}
If $\H$ achieves the minimum in (\ref{rr}), then each $X \in \H$ induces a connected subgraph of $G_F$.
\end{rem}
We will say that $G$ is {\bf rigid} if $r_{\mathcal{R}}(E) = 2|V|-3.$
\medskip

\section{Results}

Lov\'asz and Yemini \cite{Lovasz_Yemini} proved the following sufficient condition for a graph to be rigid.
\begin{theo}[Lov\'asz and Yemini \cite{Lovasz_Yemini}] \label{Lo_Ye}
 Every $6$-connected graph is rigid.
\end{theo}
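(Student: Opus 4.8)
The plan is to read off the equality $r_{\mathcal{R}}(E)=2|V|-3$ directly from the rank formula (\ref{rr}). Write $n=|V|$; since $G$ is $6$-connected, $n\ge 7$, and the admissible collection $\{V\}$ already gives $r_{\mathcal{R}}(E)\le 2n-3$, so all the content is in the reverse inequality. I would therefore fix a collection $\H$ attaining the minimum in (\ref{rr}), assume $\H\neq\{V\}$, and aim to prove $\sum_{X\in\H}(2|X|-3)\ge 2n-3$. For such an optimal $\H$, Remark~\ref{Xconnected} gives that each $G[X]$ is connected; moreover every $|X|\ge 2$ (each part of the partition is nonempty, hence spans an edge), no member of $\H$ contains another (else two parts would overlap), and in particular $X\neq V$ for every $X\in\H$. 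Since $G$ has minimum degree at least $6$, every vertex lies in some member; set $d_{\H}(v)=|\{X\in\H:v\in X\}|$ and $\sigma=\sum_{v\in V}(d_{\H}(v)-1)=\sum_{X\in\H}|X|-n\ge 0$. A one-line manipulation then rewrites the target inequality as $2\sigma\ge 3(|\H|-1)$.

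The engine of the proof is a cut bound coming from $6$-connectivity. Call a vertex \emph{shared} if it lies in at least two members of $\H$, and for $X\in\H$ let $S_X$ be its set of shared vertices. If $X$ has a non-shared (``private'') vertex, then $|S_X|\ge 6$: any edge of $G$ with one endpoint in $X$ and one outside lies, in the partition, in a part assigned to some member $Z\neq X$, so its endpoint in $X$ belongs to $X\cap Z$ and is hence shared; consequently $G-S_X$ has no edge joining the (nonempty) set of private vertices of $X$ to the (nonempty) set $V\setminus X$, and $6$-connectivity forces $|S_X|\ge 6$. Summing over all $X\in\H$ with a private vertex and using $\sum_{X\in\H}|S_X|=\sum_{v\text{ shared}}d_{\H}(v)=\sigma+(\text{number of shared vertices})$ together with $(\text{number of shared vertices})\le\sigma$, one obtains $2\sigma\ge 6p+2q$, where $p$ (resp.\ $q$) is the number of members of $\H$ with (resp.\ without) a private vertex. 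If $q=0$ this already reads $2\sigma\ge 6|\H|\ge 3(|\H|-1)$, and we are done.

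What remains — and what I expect to be the real obstacle — is to control the members $X$ with $S_X=X$, i.e.\ all of whose vertices are shared, above all the two-element ones $\{a,b\}\in\H$ with $a,b$ both shared, for which the cut bound only yields $|S_X|\ge 2$. Concretely one needs something like $q\le 3p+3$, which by the estimate above suffices to finish. The natural tool is the intersection graph $I(\H)$ on vertex set $\H$, with two members adjacent when they meet: $G$ connected makes $I(\H)$ connected, and $6$-connectivity of $G$ forces every small edge-cut of $I(\H)$ to project onto a small vertex-cut of $G$, so $I(\H)$ is in fact highly edge-connected; since a ``fully shared'' member has at least two neighbours in $I(\H)$ (it cannot be contained in a single other member, $\H$ being an antichain), a counting argument along $I(\H)$ — exhibiting, for each such member, enough additional overlap among the sets to pay its cost of $3$ — should establish $2\sigma\ge 3(|\H|-1)$ and complete the proof.

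As a sanity check, note that Theorem~\ref{Lo_Ye} will in any case also follow from our main packing result applied with one rigid and no connected spanning subgraph: a $6$-connected graph contains a $(6,1)$-connected, hence $(6,2)$-connected, simple spanning subgraph, and rigidity of a spanning subgraph implies rigidity of the whole graph.
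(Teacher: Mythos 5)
You correctly reduce the target inequality $\sum_{X\in\H}(2|X|-3)\ge 2n-3$ to $2\sigma\ge 3(|\H|-1)$, and your cut bound $|S_X|\ge 6$ for every member $X$ with a private vertex is sound (indeed it is a cleaner statement than the paper's analogous $d_{G-X_B}(X_I)\ge 6-|X_B|$, because in your set-up all of $E$ lies inside the parts). But there is a genuine gap exactly where you flag it: nothing in the argument controls the fully-shared members, and the proposed relation $q\le 3p+3$ together with the intersection-graph sketch is not carried out. The problematic case is a fully-shared member of size $2$: it contributes only $2$ to $\sum_X|S_X|\le 2\sigma$ yet costs $3$ in $3(|\H|-1)$, and the uniform counting has no way to absorb that deficit. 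As written, the proof is incomplete.

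The paper closes this hole by a different split of the optimal collection $\G$ rather than by a stronger cut bound. It defines $\H=\{X\in\G:|X|\ge 3\}$ and $F=\bigcup_{X\in\H}E(X)$, so that every size-$2$ set of $\G$ is a single edge of $E\setminus F$ (and contributes $+1$ to $\sum_{X\in\G}(2|X|-3)$); the $6$-connectivity bound is then applied to the proper parts $X_I$ of members of $\H$ and to vertices outside $V(\H)$ to lower-bound $|E\setminus F|$ via a factor-$\tfrac12$ degree count. Fully-shared members of $\H$ now automatically satisfy $|X_B|=|X|\ge 3$, which pays their cost directly, and no inequality relating the numbers of private-bearing and fully-shared members is ever needed. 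In short: treating all members of $\G$ uniformly, as you do, forces you to prove a bound that is not obviously available, whereas peeling off the size-$2$ sets and charging them to the edge side of the count lets the $|X|\ge 3$ restriction do the remaining work for free. If you want to rescue your framework, import exactly this move — redefine $\H$, $X_B$, $X_I$ with the size-$2$ members removed, and count their edges (plus cut edges out of the $X_I$'s and out of vertices of $V\setminus V(\H)$) separately.
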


Jackson and Jord\'an \cite{Jackson_Jordan2009} proved a sharpenning of Theorem \ref{Lo_Ye}.
\begin{theo}[Jackson and Jord\'an \cite{Jackson_Jordan2009}] \label{Ja_Jo}
 Every $(6,2)$-connected simple graph is rigid.
\end{theo}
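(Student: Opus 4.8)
The plan is to bound the rank $r_{\mathcal R}(E)$ from below; since $r_{\mathcal R}(E)\le 2|V|-3$ always, it suffices to prove $r_{\mathcal R}(E)\ge 2|V|-3$. Suppose not. Then by (\ref{rr}) there is a collection $\H$ of subsets of $V$ such that $\{E(X):X\in\H\}$ is a partition of $E$ and $\sum_{X\in\H}(2|X|-3)\le 2|V|-4$; among all such collections choose one for which first $\sum_{X\in\H}(2|X|-3)$ and then $|\H|$ is as small as possible. By Remark~\ref{Xconnected} each $G[X]$ with $X\in\H$ is connected; since the blocks of a partition are non-empty, every $X\in\H$ induces an edge and hence $|X|\ge 2$, and since $G$ is $6$-edge-connected it has no isolated vertex, so $\bigcup_{X\in\H}X=V$. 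If some $X$ equalled $V$ then $E(X)=E$, forcing $\H=\{V\}$ and $\sum=2|V|-3>2|V|-4$, a contradiction; so $|\H|\ge 2$ and $|X|\le|V|-1$ for all $X\in\H$. For $v\in V$ write $h(v)$ for the number of members of $\H$ containing $v$, and note $\sum_{X\in\H}|X|=\sum_{v\in V}h(v)$, so the budget rearranges to $\sum_{v\in V}h(v)\le |V|-2+\tfrac32|\H|$.

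The first genuine step is a normalisation (uncrossing) of $\H$: I want to reduce to the case $|X\cap Y|\le 1$ for all distinct $X,Y\in\H$. The basic observation is that, since $\{E(X)\}$ is a partition of $E$, for distinct $X,Y$ the set $E(X\cap Y)\subseteq E(X)\cap E(Y)$ is empty, i.e. $G[X\cap Y]$ is edgeless; combined with the connectivity of $G[X]$ and $G[Y]$, the identity $(2|X|-3)+(2|Y|-3)=(2|X\cup Y|-3)+(2|X\cap Y|-3)$, and the minimality of $\H$, one should be able to replace a crossing pair $X,Y$ with $|X\cap Y|\ge 2$ by a cheaper, or equally cheap but shorter, partition of $E$ (splitting $G[X]$ and $G[Y]$ along the components of $X\cap Y$ in $G[X]\cup G[Y]$). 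I expect this to be the main obstacle: the naive replacement of $\{X,Y\}$ by $\{X\cup Y,\ X\cap Y\}$ double-covers the edges between $X\setminus Y$ and $Y\setminus X$, so $\{E(\cdot)\}$ need no longer be a partition of $E$, and one must argue more carefully, redistributing those edges among existing blocks — and it is precisely here that simplicity of $G$ and the stronger hypothesis $(6,2)$-connectivity (rather than plain $6$-connectivity, as in Lov\'asz--Yemini, Theorem~\ref{Lo_Ye}) get used.

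Once $\H$ is normalised, the proof becomes a global count. If $v$ lies in exactly one $X\in\H$ then all $d_G(v)\ge 6$ edges at $v$ lie in $E(X)$, so $X$ contains every neighbour of $v$ and $|X|\ge 7$; letting $V_{\ge 2}=\{v:h(v)\ge 2\}$, the budget gives $|V|+|V_{\ge 2}|\le\sum_v h(v)\le |V|-2+\tfrac32|\H|$, so $|V_{\ge 2}|\le\tfrac32|\H|-2$. For a crossing-free $\H$ one computes, for each edge $e=uv$ (whose unique block is the only member of $\H$ containing both $u$ and $v$), that the number of members of $\H$ that $e$ leaves equals $h(u)+h(v)-2$, whence $\sum_{X\in\H}d_G(X)=\sum_{v\in V}d_G(v)\,(h(v)-1)$. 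Here $d_G(X)\ge 6$ for every $X\in\H$, and, crucially, the sharper local conditions of $(6,2)$-connectivity — $d_{G-z}(Y)\ge 4$ and $d_{G-z-z'}(Y)\ge 2$, applied to the sets $X$ minus their one- or two-element overlaps with neighbouring members of $\H$ — upgrade this boundary estimate (plain $6$-edge-connectivity alone is visibly too weak to close the count). Combining these boundary bounds, the identity above, the inequality $|V_{\ge 2}|\le\tfrac32|\H|-2$, and the fact that every member carrying a ``private'' vertex has size at least $7$ and at most $|V|-1$, and running through the resulting cases, contradicts the budget $\sum_{X\in\H}(2|X|-3)\le 2|V|-4$, completing the proof. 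This is the $k=1,\ell=0$ instance of the paper's packing theorem, which is proved in the same spirit: replace the single matroid $\mathcal R(G)$ by the union of $k$ copies of $\mathcal R(G)$ and $\ell$ copies of the graphic matroid, and apply the matroid union formula together with (\ref{rr}) and the Nash--Williams partition theorem.
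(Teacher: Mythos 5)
Your outline has a genuine gap at what you yourself flag as ``the main obstacle'': the normalisation (uncrossing) of $\H$ to a family with $|X\cap Y|\le 1$. You correctly observe that replacing a crossing pair $\{X,Y\}$ with $|X\cap Y|\ge 2$ by $\{X\cup Y,\ X\cap Y\}$ destroys the partition property (edges between $X\setminus Y$ and $Y\setminus X$, which previously had to live in some third block of the partition, are now also induced by $X\cup Y$), but you do not say how to repair this; ``one should be able to \dots redistribute those edges among existing blocks'' is a hope, not an argument. The standard way around this in the rigidity literature (Jackson--Jord\'an) is to work with \emph{covers} of $E$ rather than partitions and to prove separately that the minimum in the analogue of (\ref{rr}) is attained by a $1$-thin cover; your sketch neither proves nor invokes that lemma, yet your later identity $\sum_{X\in\H}d_G(X)=\sum_{v}d_G(v)(h(v)-1)$ is only valid for a $1$-thin family, so the whole second half of the argument rests on the missing step. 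The closing sentence (``running through the resulting cases, contradicts the budget'') is likewise not carried out; the listed ingredients ($|V_{\ge 2}|\le\tfrac32|\H|-2$, $d_G(X)\ge 6$, and the local degree conditions applied to $X$ minus one- or two-vertex overlaps) are not assembled into an inequality chain, and it is not evident that they close.

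Your route is also genuinely different from the paper's, and it is worth seeing why the paper's is easier. The paper never normalises $\H$. It splits each $X\in\H$ into a border $X_B=X\cap\bigcup_{Y\in\H-X}Y$ and an interior $X_I=X\setminus X_B$, and observes that every edge leaving $X_I$ to the outside of $X$ avoids $F=\bigcup_{X\in\H,\,|X|\ge 3}E(X)$, because an interior vertex lies in no other block. The $(6,2)$-connectivity bound $d_{G-X_B}(X_I)\ge 6-2|X_B|$ then lower-bounds $|E\setminus F|$ by $3|\H'|-\sum_{X\in\H'}|X_B|+3(|V|-|V(\H)|)$, and this is played off against the elementary vertex count $\sum_X 2|X_I|+\sum_X|X_B|\ge 2|V(\H)|$ (each vertex of $V(\H)$ contributes at least $2$, either via its unique interior membership or via belonging to at least two borders). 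No control on pairwise intersections is needed, and simplicity is used only to ensure that size-$2$ blocks of $\G$ induce a single edge --- not, as you predicted, inside an uncrossing argument. If you want to salvage your approach, the clean path is to first prove or cite the $1$-thin cover lemma and redo the rank formula over covers; but the paper's border/interior count is shorter, and, as the paper shows, it is the version that scales to the matroid-union setting of Theorem~\ref{main_th}.
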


Jord\'an \cite{Jordan_2005} generalized Theorem \ref{Lo_Ye} and gave a sufficient condition for the existence of a packing of rigid spanning subgraphs.
\begin{theo}[Jord\'an \cite{Jordan_2005}] \label{Jordan_main}
 Let $k \geq 1$ be an integer. Every $6k$-connected graph contains $k$ edge-disjoint rigid spanning subgraphs.
\end{theo}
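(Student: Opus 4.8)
The plan is to deduce Theorem~\ref{Jordan_main} from the Lov\'asz--Yemini rank formula (\ref{rr}) together with Edmonds' matroid union theorem. A $6k$-connected graph contains a $(6k,1)$-connected simple spanning subgraph, which is both $6$-connected and $(6k,2k)$-connected; since it suffices to find the $k$ edge-disjoint rigid spanning subgraphs inside this subgraph, we may assume $G$ itself is simple and $(6k,2k)$-connected. By Theorem~\ref{Lo_Ye} $G$ is rigid, so $r_{\mathcal{R}}(E)=2|V|-3$, and a spanning subgraph $(V,E')$ of $G$ is rigid if and only if $E'$ contains a basis of $\mathcal{R}(G)$. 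Hence the assertion is equivalent to saying that $\mathcal{R}(G)$ admits $k$ pairwise disjoint bases, i.e.\ that the $k$-fold matroid union $\mathcal{R}(G)\vee\cdots\vee\mathcal{R}(G)$ has rank $k\,r_{\mathcal{R}}(E)$. By the matroid union theorem this rank equals $\min_{F\subseteq E}\big(|E\setminus F|+k\,r_{\mathcal{R}}(F)\big)$, which equals $k\,r_{\mathcal{R}}(E)$ if and only if
\[
|E\setminus F|\ \ge\ k\big(2|V|-3-r_{\mathcal{R}}(F)\big)\qquad\text{for every }F\subseteq E,
\]
so it is enough to establish this inequality.

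First I would rewrite the right-hand side with (\ref{rr}). Fix $F\subseteq E$ and let $\H=\{X_1,\dots,X_t\}$ attain the minimum in (\ref{rr}), so $r_{\mathcal{R}}(F)=\sum_i(2|X_i|-3)$ and, by Remark~\ref{Xconnected}, each $X_i$ induces a connected subgraph of $G_F$; in particular $|X_i|\ge 2$. Two reductions are available. Adding to $F$ every edge of $\bigcup_iE(X_i)$ keeps $\H$ optimal and leaves $r_{\mathcal{R}}(F)$ unchanged while only shrinking $E\setminus F$, so we may assume $F=\bigcup_iE(X_i)$. And if $|X_i\cap X_j|\ge 2$ for some $i\ne j$, replacing $X_i,X_j$ by $X_i\cup X_j$ strictly decreases $\sum_i(2|X_i|-3)$, contradicting optimality; hence $|X_i\cap X_j|\le 1$ for all $i\ne j$. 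What remains is the combinatorial inequality $|E\setminus F|\ge k\big(2|V|-3-\sum_i(2|X_i|-3)\big)$.

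The core of the argument is a counting estimate in the spirit of the Lov\'asz--Yemini proof of Theorem~\ref{Lo_Ye}. Let $U=V\setminus\bigcup_iX_i$ be the set of vertices lying in no $X_i$; all edges incident to $U$ belong to $E\setminus F$. Since $\sum_i|X_i|=(|V|-|U|)+M$ with $M:=\sum_i|X_i|-\left|\bigcup_iX_i\right|\ge 0$ (and $M\le\binom{t}{2}$ by the bound on pairwise intersections), the target becomes
\[
|E\setminus F|\ \ge\ k\,(2|U|+3t-3-2M).
\]
We may assume $p:=t+|U|\ge 2$, since otherwise $2|U|+3t-3\le 0$. If $M=0$, the sets $X_1,\dots,X_t$ together with the singletons $\{u\}$ $(u\in U)$ form a partition $\mathcal{P}$ of $V$ into $p$ classes and, because $F=\bigcup_iE(X_i)$, the set $E\setminus F$ is exactly the set of edges of $G$ joining distinct classes of $\mathcal{P}$. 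Contracting the classes of $\mathcal{P}$ preserves $6k$-edge-connectivity, so $G/\mathcal{P}$ is a $6k$-edge-connected graph on $p$ vertices and hence has at least $3kp=3k(t+|U|)$ edges, which is at least $k(2|U|+3t-3)$; this settles the case $M=0$. When $M>0$ the right-hand side only decreases, and I would handle the general (overlapping) case by the analogous cut count: for each $X_i$ one bounds the number of edges of the cut $(X_i,V\setminus X_i)$ that are absorbed into some other $X_j$, which can happen only through one of the at most $t-1$ vertices $X_i$ shares with the other sets; alternatively, one can induct on the number of overlapping pairs, at each step uncrossing or merging two of the sets and checking that the two sides of the inequality change compatibly.

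The step I expect to be the real obstacle is this last one. Once the $X_i$ are allowed to overlap (in at most one vertex per pair), $E\setminus F$ is no longer simply the cross-edge set of a partition, and one must keep precise track of which cut-edges of each $X_i$ land inside a neighbouring set. It is exactly here that the constant $6k$ has to be spent accurately, and that the full $(6k,2k)$-connectivity of the simple spanning subgraph --- used through the bound $d_{G-X}(Y)\ge 6k-2k|X|$ --- is what is needed, rather than mere $6k$-edge-connectivity. This is the same difficulty that already makes the $6$-connected case (Theorem~\ref{Lo_Ye}) nontrivial; pushing the counting a little further along the same lines should also yield the sharper $(6,2)$-connected and, more generally, $(6k,2k)$-connected forms, which suggests that Theorem~\ref{Jordan_main} is a special case of a single packing statement rather than an isolated result.
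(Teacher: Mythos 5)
Your setup is sound and matches the paper's framework: reduce to a simple $(6k,2k)$-connected spanning subgraph, apply Edmonds' union theorem to $k$ copies of $\mathcal{R}(G)$, reformulate the goal as the single inequality $|E\setminus F|\geq k\bigl(2|V|-3-r_{\mathcal{R}}(F)\bigr)$ for all $F\subseteq E$, and expand $r_{\mathcal{R}}(F)$ via (\ref{rr}). Your reductions (taking $F=\bigcup_i E(X_i)$ and noting $|X_i\cap X_j|\leq 1$) are also valid, and your contraction argument does dispose of the case $M=0$ where the $X_i$ are pairwise disjoint.

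But the case $M>0$ is where the theorem actually lives --- it is exactly what makes Lov\'asz--Yemini nontrivial, since a $6$-connected graph easily contains a $3$-tree-connected spanning subgraph once the covering sets are disjoint --- and there you do not have a proof. Your first suggestion, ``induct on the number of overlapping pairs, at each step uncrossing or merging two of the sets,'' does not go through: you have already uncrossed to $|X_i\cap X_j|\leq 1$, and merging two sets sharing a single vertex increases $\sum(2|X_i|-3)$, so it is not an available move. Your second suggestion, a refined cut count tracking which cut edges of $X_i$ land in a neighbouring $X_j$, is the right instinct but is left as a sketch; the difficulty you flag is precisely the content of the missing step. The paper fills this gap by splitting each $X\in\H$ into its border $X_B=X\cap\bigcup_{Y\in\H\setminus\{X\}}Y$ and interior $X_I=X\setminus X_B$, observing that no $F$-edge leaves $X_I$ in $G-X_B$, and invoking the mixed bound $d_{G-X_B}(X_I)\geq 6k-2k|X_B|$ from $(6k,2k)$-connectivity; summing these (plus degrees of components disjoint from all $X$'s) and combining with a vertex-counting inequality of the type $\sum_X 2|X_I|+\sum_X|X_B|+2|\K|\geq 2n$ closes the argument. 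This is exactly the bookkeeping your last paragraph anticipates as ``the real obstacle'' without carrying it out; as written, the proposal is therefore incomplete at the decisive step. (One further cosmetic difference: rather than your reduction enlarging $F$ to $\bigcup_i E(X_i)$, the paper picks a \emph{smallest} minimizer $F$ and uses that minimality to rule out $|X|=2$ sets in $\H$; both simplifications are workable, but yours does not remove the need for the border/interior analysis.)
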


The main result of this paper contains a common generalization of Theorems \ref{Ja_Jo} and \ref{Jordan_main}. It provides a sufficient condition to have a packing of rigid spanning subgraphs and spanning trees.
\begin{theo} \label{main_th}
 Let $k \geq 1$ and $\ell \geq 0$ be integers. Every $(6k+2\ell,2k)$-connected simple graph contains $k$ rigid spanning subgraphs and $\ell$ spanning trees pairwise edge-disjoint.
\end{theo}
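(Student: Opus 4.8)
The plan is to derive the theorem from the matroid union theorem. Since $k\ge 1$ and $\ell\ge 0$, a $(6k+2\ell,2k)$-connected simple graph is in particular $(6,2)$-connected and $6$-edge-connected (for $|X|\le 2$ one has $6k+2\ell-2k|X|\ge 6-2|X|$, and the remaining cases of $(6,2)$-connectivity are vacuous); hence by Theorem \ref{Ja_Jo} it is rigid and it is connected, so $r_{\mathcal{R}}(E)=2|V|-3$ and the cycle matroid $\mathcal{M}(G)$ of $G$ has rank $|V|-1$. A packing of $k$ rigid spanning subgraphs and $\ell$ spanning trees is precisely a packing of $k$ bases of $\mathcal{R}(G)$ and $\ell$ bases of $\mathcal{M}(G)$ that are pairwise edge-disjoint (leftover edges may be added to any of the subgraphs). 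By the matroid union theorem, applied to the sum of $k$ copies of $\mathcal{R}(G)$ and $\ell$ copies of $\mathcal{M}(G)$, such a packing exists if and only if
\[
k\bigl(r_{\mathcal{R}}(E)-r_{\mathcal{R}}(F)\bigr)+\ell\bigl(r_{\mathcal{M}}(E)-r_{\mathcal{M}}(F)\bigr)\le |E\setminus F|\qquad\text{for every }F\subseteq E .
\]
So it suffices to verify this inequality.

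Fix $F\subseteq E$. For the graphic term, $r_{\mathcal{M}}(E)-r_{\mathcal{M}}(F)=c(G_F)-1$, where $c(G_F)$ is the number of connected components of $G_F=(V,F)$. For the rigidity term, by (\ref{rr}) choose $\H=\{X_1,\dots,X_t\}$ attaining the minimum, so $r_{\mathcal{R}}(F)=\sum_i(2|X_i|-3)$; by Remark \ref{Xconnected} each $X_i$ induces a connected subgraph of $G_F$, and we may assume, as is standard for the rigidity matroid (merging two blocks sharing at least two vertices does not increase the objective), that the $X_i$ pairwise share at most one vertex. Write $J_i=X_i\cap\bigcup_{j\neq i}X_j$ for the joint vertices of $X_i$. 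Substituting these two expressions reduces the desired inequality to a purely combinatorial statement involving $t$, the $|X_i|$, the $|J_i|$, the number $c(G_F)$, and the number of vertices of $V$ missed by $\bigcup_i X_i$ (the isolated vertices of $G_F$).

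To prove it I would bound $|E\setminus F|$ from below by counting, using $(6k+2\ell,2k)$-connectivity. The key observation is that no edge of $F$ joins $X_i\setminus J_i$ to $V\setminus X_i$: if $uv\in F$ then, because $\{E(X)\cap F\}$ partitions $F$, some block of $\H$ contains both $u$ and $v$, and if $u\in X_i\setminus J_i$ this block must be $X_i$. Hence, whenever $X_i\neq V$ and $X_i\setminus J_i\neq\emptyset$, applying $(6k+2\ell,2k)$-connectivity with deleted set $J_i$ and side $X_i\setminus J_i$ produces at least $6k+2\ell-2k|J_i|$ edges of $E\setminus F$ leaving $X_i$ and avoiding $J_i$; likewise every isolated vertex of $G_F$ is incident with at least $6k+2\ell$ edges of $E\setminus F$. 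Since a vertex lies in at most one block as a non-joint vertex, and an isolated vertex lies in no block, each edge of $E\setminus F$ is counted at most twice, giving a lower bound on $2|E\setminus F|$ in terms of $\sum_i(6k+2\ell-2k|J_i|)$ and the degrees at the isolated vertices. One then checks, via a standard double count relating $\sum_i|J_i|$, $\sum_i|X_i|$, $|V|$ and the number of vertices in two or more blocks, together with $c(G_F)\le t+(\text{number of isolated vertices of }G_F)$, that this lower bound is at least $k\bigl(2|V|-3-r_{\mathcal{R}}(F)\bigr)+\ell\bigl(c(G_F)-1\bigr)$, with the constants $6k+2\ell$ and slope $2k$ matching exactly.

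The main obstacle is precisely this last counting step, and the delicate configurations are those in which the blocks of $\H$ overlap heavily — for instance many small blocks pairwise sharing one vertex, such as a set of edges forming a triangle or a book — since then a block with $X_i=J_i$ contributes $2|X_i|-3$ to $r_{\mathcal{R}}(F)$ but contributes no edge to the crude ``edges leaving a block'' count. Handling such blocks requires a more careful argument: one must also use the edges lying inside blocks but missing from $F$ (here simplicity of $G$ is essential, since a block on $|X_i|$ vertices has at most $\binom{|X_i|}{2}$ edges) or count edges incident to the joint vertices directly. Making all of this balance so that the threshold is exactly $6k+2\ell$ with slope $2k$, and simultaneously absorbing the spanning-tree term $\ell(c(G_F)-1)$ into the additional $2\ell$ without any further hypothesis, is the crux; the rest is the soft matroid-theoretic packaging described above.
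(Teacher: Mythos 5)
Your framework is exactly the paper's: define the matroid $\mathcal{M}_{k,\ell}(G)$ as the union of $k$ copies of $\mathcal{R}(G)$ and $\ell$ copies of $\mathcal{C}(G)$, apply Edmonds' rank formula (\ref{rs}), plug in (\ref{rr}) and (\ref{rc}), and reduce everything to a counting inequality over a partition $\H$, the ``border'' sets $X_B$ (your $J_i$), and the isolated/untouched components. Up to that point your proposal and the paper coincide, and your ``counted at most twice'' observation for edges in $E\setminus F$ is the same $\tfrac12$ factor that appears in the paper's Claim~\ref{otheredges}.

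However, there is a genuine gap, and you say so yourself: the delicate configurations in which a block $X$ has $X=X_B$ (every vertex shared). For such a block the connectivity inequality applied with deleted set $X_B$ and side $X_I=X\setminus X_B$ is vacuous, yet the block still costs $2|X|-3$ in $r_{\mathcal{R}}(F)$, and this is exactly where the naive count leaks. You gesture at two possible repairs (counting non-$F$ edges \emph{inside} blocks using $\binom{|X|}{2}$, or counting at the joint vertices directly), but you do not carry either out, and neither is what the paper does. The paper's resolution rests on an idea that is absent from your proposal: \emph{choose $F$ to be a minimizer of minimum cardinality}. With that extra freedom one shows (Claim~\ref{3c}) that every $X\in\H$ has $|X|\ge 3$: if some $Y\in\H$ had $|Y|=2$, then by simplicity $Y$ induces one edge $e$, and deleting $e$ from $F$ drops $kr_{\mathcal{R}}$ by at least $k\ge 1$ while increasing $|E\setminus F|$ by only $1$ and not increasing $\ell r_{\mathcal{C}}$, contradicting the minimality of $|F|$. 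Once $|X|\ge 3$ holds for all blocks, the blocks with $X_I=\emptyset$ contribute $|X_B|=|X|\ge 3$ to $\sum|X_B|$, which exactly offsets the $-3$ per block in $\sum(2|X|-3)$; this is the cancellation that makes the threshold come out to precisely $6k+2\ell$ with slope $2k$. Also note the paper does not need your normalization that distinct $X_i$'s share at most one vertex; the counting in Claim~\ref{counting} (each vertex contributes $2$ either via $\K$, via $X_I$, or via appearing in at least two borders) handles overlapping blocks directly. In short: the matroid packaging is correct and matches the paper, but the heart of the proof --- the minimum-size choice of $F$ and the resulting bound $|X|\ge3$ for every block --- is missing, and the counting inequality does not go through without it.
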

Note that in Theorem \ref{Ja_Jo}, the connectivity condition is the best possible since there exist non-rigid $(5,2)$-connected graphs (see \cite{Lovasz_Yemini}) and non-rigid $(6,3)$-connected graphs, for an example see Figure \ref{fig:ex}.

\begin{figure}[h]
 \centering
 \includegraphics[width=5cm]{./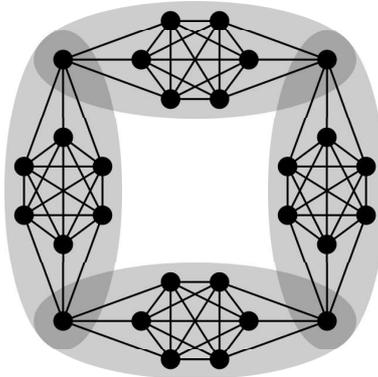}

 \caption{A $(6,3)$-connected non-rigid graph $G=(V,E)$. The collection $\H$ of the four grey vertex-sets partitions $E$. Hence, by (\ref{rr}), $\mathcal{R}_G(E) \leq \sum_{X \in \H}(2|X|-3) = 4(2\times8-3) = 52 < 53 = 2\times28 - 3 = 2|V|-3$. Thus $G$ is not rigid. The reader can easily check that $G$ is $(6,3)$-connected.}
 \label{fig:ex}
\end{figure}
Let us see some corollaries of the previous results.
Theorem \ref{main_th} applied for $k=1$ and $\ell=0$ provides Theorem \ref{Ja_Jo}.
Since $6k$-connectivity implies $(6k,2k)$-connectivity of a simple spanning subgraph, Theorem \ref{main_th} implies Theorem \ref{Jordan_main}.\\

One can easily derive from the rank function of $\mathcal{R}(G)$ that rigid graphs with at least $3$ vertices are $2$-connected (see Lemma 2.6 in \cite{Jackson_Jordan2005}). Thus, Theorem \ref{main_th} gives the following corollary.
\begin{cor} \label{main_cor}
 Let $k \geq 1$ and $\ell \geq 0$ be integers. Every $(6k+2\ell,2k)$-connected simple graph contains $k$ $2$-connected and $\ell$ connected edge-disjoint spanning subgraphs.
\end{cor}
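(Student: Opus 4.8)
The plan is to derive Corollary~\ref{main_cor} directly from Theorem~\ref{main_th}, using only the structural fact cited just above from \cite{Jackson_Jordan2005}: every rigid graph on at least three vertices is $2$-connected. So the proof will be a short reduction rather than a fresh argument.

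First I would apply Theorem~\ref{main_th} to the given $(6k+2\ell,2k)$-connected simple graph $G$, which yields pairwise edge-disjoint spanning subgraphs $G_1,\dots,G_k$, each rigid, and $T_1,\dots,T_\ell$, each a spanning tree of $G$. Each $T_i$ is connected by definition, so these already provide the required $\ell$ connected edge-disjoint spanning subgraphs. It then remains only to upgrade each rigid $G_i$ to a $2$-connected spanning subgraph. For that I need $|V|\ge 3$, which is forced by the hypothesis: for $k\ge 1$ a $(6k+2\ell,2k)$-connected graph is in particular $6$-edge-connected, hence has minimum degree at least $6$ and thus at least $7$ vertices. (If one wanted to be fully careful, the only genuinely degenerate case, $|V|\le 2$, is ruled out this way.) Since each $G_i$ is a spanning subgraph it has vertex set $V$ with $|V|\ge 3$, so Lemma~2.6 of \cite{Jackson_Jordan2005} gives that $G_i$ is $2$-connected. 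Hence $G_1,\dots,G_k,T_1,\dots,T_\ell$ are the desired $k$ $2$-connected and $\ell$ connected pairwise edge-disjoint spanning subgraphs.

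Because this is essentially a one-line consequence of the main theorem together with a known lemma, I do not expect any real obstacle; the only point deserving a moment's attention is checking $|V|\ge 3$ so that the implication ``rigid $\Rightarrow$ $2$-connected'' may be invoked, and this follows immediately from the connectivity assumption.
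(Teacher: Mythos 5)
Your reduction is exactly the paper's: apply Theorem~\ref{main_th}, keep the spanning trees as the connected subgraphs, and invoke Lemma~2.6 of \cite{Jackson_Jordan2005} (rigid and at least three vertices implies $2$-connected) to upgrade the rigid spanning subgraphs. The extra check that $|V|\ge 3$, which you derive from $6$-edge-connectivity, is a sound and welcome piece of care, but the argument is otherwise identical to the paper's.
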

Corollary \ref{main_cor} allows us to improve two results of Jord\'an. The first one deals with the following conjecture of Kriesell, see in \cite{Jordan_2005}.
\begin{conj}[Kriesell]
For every positive integer $\lambda$ there exists a (smallest) $f(\lambda)$ such that every $f(\lambda)$-connected graph $G$ contains a spanning tree $T$ for which $G − E(T)$ is $\lambda$-connected.
\end{conj}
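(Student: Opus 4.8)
This conjecture is open in full generality, and the plan here is not to settle it but only to verify the case $\lambda=2$, together with the explicit bound $f(2)\le 8$, by deducing it from Corollary \ref{main_cor}. (The case $\lambda=1$ is classical: $G$ has a spanning tree $T$ with $G-E(T)$ connected if and only if $G$ has two edge-disjoint spanning trees, so $f(1)\le 4$ by the Tutte--Nash-Williams theorem.) So the goal is to prove: every $8$-connected graph $G$ admits a spanning tree $T$ such that $G-E(T)$ is $2$-connected.

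First I would pass to a simple spanning subgraph with the right connectivity. By the remarks of Section 1, the $8$-connected graph $G$ contains an $(8,1)$-connected simple spanning subgraph, and $(8,1)$-connectivity implies $(8,2)$-connectivity; call such a subgraph $G'$. Now apply Corollary \ref{main_cor} to $G'$ with $k=1$ and $\ell=1$: since $6k+2\ell=8$ and $2k=2$, it yields edge-disjoint spanning subgraphs $H$ and $F$ of $G'$ with $H$ $2$-connected and $F$ connected. Let $T$ be any spanning tree contained in $F$. Then $T$ and $H$ are edge-disjoint, so $E(H)\subseteq E(G')\setminus E(T)\subseteq E(G)\setminus E(T)$. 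Hence $G-E(T)$ is a graph on vertex set $V$ containing the spanning $2$-connected subgraph $H$; since $|V|\ge 9$ and adding edges to a $2$-connected graph on a fixed vertex set keeps it $2$-connected, $G-E(T)$ is $2$-connected. This establishes $f(2)\le 8$.

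The genuine obstacle is the range $\lambda\ge 3$. Theorem \ref{main_th}, and hence Corollary \ref{main_cor}, produces \emph{rigid} spanning subgraphs, and rigid graphs are only guaranteed to be $2$-connected (Lemma 2.6 of \cite{Jackson_Jordan2005}); so this line of argument gives no control over $\lambda$-connectivity once $\lambda\ge 3$. Proving the conjecture for larger $\lambda$ would require a packing theorem delivering a spanning tree together with a spanning subgraph of prescribed high vertex-connectivity, which the rigidity matroid does not supply, and this remains open.
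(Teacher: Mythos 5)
You correctly treat the statement as an open conjecture and establish only what the paper itself claims, namely $f(2)\le 8$, by passing to an $(8,2)$-connected simple spanning subgraph and applying Corollary \ref{main_cor} with $k=1$, $\ell=1$; this is exactly the paper's route (the paper merely says the corollary "directly implies" the bound, and your write-out of that implication is correct). No issues.
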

As Jord\'an pointed out in \cite{Jordan_2005}, Theorem \ref{Jordan_main} answers this conjecture for $\lambda=2$ by showing that $f(2)\leq 12$. Corollary \ref{main_cor} applied for $k=1$ and $\ell=1$ directly implies that $f(2) \leq 8$.
\begin{cor}
 Every $8$-connected graph $G$ contains a spanning tree $T$ such that $G-E(T)$ is $2$-connected.
\end{cor}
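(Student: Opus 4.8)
The plan is to obtain this statement as an immediate consequence of Corollary~\ref{main_cor} with the parameters $k=1$ and $\ell=1$, for which the connectivity requirement reads $(6k+2\ell,2k)=(8,2)$-connectivity of a \emph{simple} graph. So the first step is to pass from an $8$-connected graph $G$ to that setting: as recorded in Section~1, $G$ contains an $(8,1)$-connected simple spanning subgraph $H$, and $(8,1)$-connectivity implies $(8,q)$-connectivity for every $q\geq 1$, in particular $(8,2)$-connectivity. Thus $H$ is a simple $(8,2)$-connected spanning subgraph of $G$ on the same vertex set $V$, with $|V|>8$.

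Next I would apply Corollary~\ref{main_cor} to $H$ with $k=\ell=1$: it produces edge-disjoint spanning subgraphs $H_1$ and $H_2$ of $H$ such that $H_1$ is $2$-connected and $H_2$ is connected. Pick any spanning tree $T$ of $H_2$. Since $E(T)\subseteq E(H_2)\subseteq E(H)\subseteq E(G)$ and $T$ spans $V$, $T$ is a spanning tree of $G$; and $E(T)$ is disjoint from $E(H_1)$ because $H_1$ and $H_2$ are edge-disjoint, so $H_1$ is a spanning subgraph of $G-E(T)$.

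It remains to conclude that $G-E(T)$ is $2$-connected, which uses only that $2$-connectivity is preserved under adding edges on a fixed vertex set (and that $|V|>8>2$): as $G-E(T)$ contains the $2$-connected spanning subgraph $H_1$, it is itself $2$-connected. As for the main obstacle, at this level there is essentially none — all the real work is packaged inside Corollary~\ref{main_cor}, hence inside Theorem~\ref{main_th}; the only points that deserve a word of care are the reduction from vertex-connectivity of $G$ to the $(p,q)$-connectivity of a simple spanning subgraph, and the monotonicity of $2$-connectivity under edge additions used in the final line.
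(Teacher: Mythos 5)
Your proof is correct and is exactly the argument the paper intends when it says Corollary~\ref{main_cor} with $k=\ell=1$ ``directly implies'' the bound $f(2)\leq 8$; you have merely spelled out the reduction from $8$-connectivity to simple $(8,2)$-connectivity and the passage from a connected spanning subgraph to a spanning tree, both of which are routine.
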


The other improvement deals with the following conjecture of Thomassen \cite{Thomassen_1989}.
\begin{conj}[Thomassen \cite{Thomassen_1989}]
For every positive integer $\lambda$ there exists a (smallest) $g(\lambda)$ such that every $g(\lambda)$-connected graph $G$ has a $\lambda$-connected orientation.
\end{conj}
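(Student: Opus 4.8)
As in the case of Kriesell's conjecture, Theorem~\ref{Jordan_main} gives a partial answer here: Jord\'an \cite{Jordan_2005} deduced from it that $g(2)\le 18$. Using Theorem~\ref{main_th} in its place lowers this bound.
\begin{cor}\label{orientation_cor}
Every $14$-connected graph has a $2$-connected orientation; equivalently, $g(2)\le 14$.
\end{cor}

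The plan is to reduce Corollary~\ref{orientation_cor} to a known sufficient condition for the existence of a $2$-connected orientation. Let $G$ be a $14$-connected graph. Since loops and parallel edges do not affect vertex connectivity, $G$ contains a simple $(14,1)$-connected spanning subgraph $G'$, and $(14,1)$-connectivity implies $(14,2)$-connectivity. Writing $14=6k+2\ell$ with $k=1$ and $\ell=4$, so that $2k=2$, the graph $G'$ is $(6k+2\ell,2k)$-connected, and Theorem~\ref{main_th} supplies a rigid spanning subgraph $R$ of $G'$ together with four spanning trees $T_1,T_2,T_3,T_4$, the five subgraphs being pairwise edge-disjoint. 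In particular $T_1\cup T_2\cup T_3\cup T_4$ is a $4$-edge-connected spanning subgraph of $G'-E(R)$.

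The second step is the orientation argument. Here I would invoke a theorem of Berg and Jord\'an on $2$-connected orientations --- the same type of tool underlying the bound $g(2)\le 18$ --- which, in the version relevant here, guarantees a $2$-connected orientation of a graph once it contains a rigid spanning subgraph whose complement carries enough edge-connectivity, four edge-disjoint spanning trees being comfortably sufficient. Applied to $R$ and $G'-E(R)$, it produces a $2$-connected orientation $D_0$ of a spanning subgraph of $G'$. Orienting every edge of $G$ outside this spanning subgraph arbitrarily extends $D_0$ to an orientation $D$ of $G$; since $D-v$ contains the strongly connected digraph $D_0-v$ for each vertex $v$, and $|V|>14$, the orientation $D$ is $2$-connected, as required.

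The crux is the second step: one has to pin down the precise form of the Berg--Jord\'an orientation theorem and check that the packing delivered by Theorem~\ref{main_th} meets its hypothesis. The value $\ell=4$ --- hence the bound $14$ rather than a smaller number --- is forced by how much connectivity that theorem demands of the part of the graph lying outside the rigid spanning subgraph, and is not an artefact of Theorem~\ref{main_th}. If the orientation theorem is phrased via a spanning subgraph with a prescribed edge-count and sparsity conditions rather than via its complement being edge-connected, the strategy is unchanged, but one must additionally trim $R\cup T_1\cup T_2\cup T_3\cup T_4$ down to such a subgraph by a routine matroid-union or splitting-off computation; that bookkeeping, rather than the packing itself, is where the remaining effort lies.
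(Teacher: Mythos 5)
There is a genuine gap in the second step, and the decomposition you extract in the first step is not the one that feeds into any known orientation theorem. The Berg--Jord\'an result you invoke does not have the form you ascribe to it: it is a statement about \emph{Eulerian} graphs (an Eulerian graph whose vertex-deleted subgraphs are all $2$-edge-connected has a $2$-connected orientation), not about graphs containing a rigid spanning subgraph whose complement packs spanning trees. Rigidity enters these arguments only through the fact that a rigid graph on at least $3$ vertices is $2$-connected; it plays no role in the orientation step itself. With your choice $k=1$, $\ell=4$ you obtain one rigid spanning subgraph $R$ and four spanning trees, but neither $R-v$ nor the union of the four trees minus $v$ need be $2$-edge-connected (a union of four edge-disjoint spanning trees can have a cut vertex), so you cannot verify the hypothesis of any version of the orientation theorem, and no amount of ``trimming by matroid union'' repairs this: the obstruction is that you have only one $2$-connected spanning subgraph where two are needed.

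The paper's route is different and avoids Berg--Jord\'an altogether. It applies Corollary~\ref{main_cor} with $k=2$, $\ell=1$ (note $6\cdot 2+2\cdot 1=14$, the same total as your $6\cdot 1+2\cdot 4$ but a different split): this yields two edge-disjoint $2$-connected spanning subgraphs $H_1,H_2$ and a spanning tree $F$. Since each $H_i-v$ is connected, the union $H'=H_1\cup H_2$ satisfies that $H'-v$ is $2$-edge-connected for every $v$. Letting $T$ be the set of odd-degree vertices of $H'$, one adds a $T$-join taken from the tree $F$ to make $H'$ Eulerian without destroying this property, and then applies the Kir\'aly--Szigeti theorem (Theorem~\ref{KSz}) with $k=1$ to orient the resulting Eulerian graph so that every vertex-deleted subdigraph is strongly connected; extending to the rest of $G$ arbitrarily gives the $2$-connected orientation. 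If you want to salvage your write-up, replace your packing by $k=2$, $\ell=1$ and your orientation tool by Theorem~\ref{KSz} (or by the Eulerian form of Berg--Jord\'an, which would also work here but gives nothing stronger).
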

By applying Theorem  \ref{Jordan_main} and an orientation result of Berg and Jord\'an \cite {Berg_Jordan},
Jord\'an proved in \cite{Jordan_2005} the conjecture for $\lambda=2$ by showing that $g(2)\leq 18$. Corollary \ref{main_cor} allows us to prove a general result that implies $g(2) \leq 14$. For this purpose, we use a result of Kir\'aly and Szigeti \cite{KiraSzig}.
\begin{theo}[Kir\'aly and Szigeti \cite{KiraSzig}] \label{KSz}
 An Eulerian graph $G=(V,E)$ has an Eulerian orientation $D$ such that $D-v$ is $k$-arc-connected for all $v \in V$ if and only if $G-v$ is $2k$-edge-connected for all $v \in V$.
\end{theo}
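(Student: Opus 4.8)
\textbf{Proof proposal for Theorem \ref{KSz} (Kir\'aly and Szigeti).}

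The plan is to prove both directions. The necessity is the easy half: if $D$ is an orientation of $G$ with $D-v$ $k$-arc-connected for all $v$, then since arc-connectivity of a digraph is at most its minimum in- and out-degree, and more to the point, by the directed Menger/Nash-Williams characterization, $k$-arc-connectivity of $D-v$ forces $\varrho_{D-v}(X), \delta_{D-v}(X) \geq k$ for every nonempty proper $X \subseteq V\setminus v$; adding these gives $d_{G-v}(X) \geq 2k$, so $G-v$ is $2k$-edge-connected. (One must be mildly careful about the case $v \in X$ or $v$ outside, but the cut $d_{G-v}(X)$ only counts edges of $G$ with both ends in $V\setminus v$, so the inequality is immediate once $v$ is deleted.)

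The substance is sufficiency. Assume $G$ is Eulerian and $G-v$ is $2k$-edge-connected for every $v \in V$; we must produce an Eulerian orientation $D$ with $D-v$ $k$-arc-connected for all $v$. Since $G$ is Eulerian, it trivially has Eulerian orientations, so the game is to pick a good one. By Nash-Williams' orientation theorem (or by repeatedly applying the splitting-off machinery), an Eulerian orientation $D$ makes every set balanced, i.e. $\varrho_D(X) = \delta_D(X) = d_G(X)/2$ for all $X$. The trouble is only the deleted digraphs $D-v$: we need $\varrho_{D-v}(X) \geq k$ and $\delta_{D-v}(X) \geq k$ for every nonempty proper $X \subseteq V \setminus v$. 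The standard approach here is the \emph{admissible splitting-off} technique of Mader/Lov\'asz, adapted to preserve a local-connectivity requirement ``away from each vertex.'' Concretely, I would add a new vertex $s$, or work directly at a fixed vertex, and show that one can always split off a pair of edges at a vertex of degree $\geq$ something while maintaining the invariant that $G-v$ stays $2k$-edge-connected for each remaining $v$; the resulting orientation is then read off from the order of splittings, which is how one turns an undirected splitting sequence into an orientation (each split pair $\{uw, ww'\}$ becomes an arc).

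The key technical step — and the main obstacle — is proving the \emph{admissibility lemma}: if $G$ is Eulerian and $G-v$ is $2k$-edge-connected for all $v$, and $x$ is any vertex, then some pair of edges at $x$ can be split off so that in the new (smaller, still Eulerian) graph $G'$, $G'-v$ is still $2k$-edge-connected for all $v \neq x$, and the ``loss'' at $x$ is controlled. The obstruction to splitting a pair $xu, xw$ is a \emph{tight} or \emph{dangerous} set $X$ with $u,w \in X$, $x \notin X$, and $d_{G-v}(X)$ close to $2k$ for some $v$; one shows by a submodularity/uncrossing argument on the function $X \mapsto d_{G-v}(X)$ (applied simultaneously, or via a single auxiliary function, over all choices of $v$) that these dangerous sets cannot cover all pairs at $x$ unless $x$ has small degree, in which case a direct argument or induction on $|V|+|E|$ closes the case. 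The parity of $d_G(X)$ (always even, since $G$ is Eulerian) is what gives the extra slack that makes $k$ rather than $2k$ the right bound: a balanced cut of size $2k$ splits exactly into $k$ and $k$. I would then finish by induction: split off all edges at $x$, apply the inductive hypothesis to $G'$ to orient it Eulerianly with the vertex-deleted $k$-arc-connectivity property, reinstate the split edges as arcs respecting their split order, and verify — again by the balanced-cut count plus the preserved undirected connectivity — that $D-v$ is $k$-arc-connected for every $v$, including $v = x$, where the contribution of the reinstated arcs must be accounted for separately.
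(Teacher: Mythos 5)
First, a point of order: the paper offers no proof of Theorem~\ref{KSz} --- it is quoted verbatim from Kir\'aly and Szigeti \cite{KiraSzig} and used as a black box --- so there is no in-paper argument to compare yours against; your proposal has to stand on its own as a proof of their theorem.

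Your necessity direction is correct and complete: $k$-arc-connectivity of $D-v$ gives $\varrho_{D-v}(X),\delta_{D-v}(X)\ge k$ for every nonempty proper $X\subseteq V\setminus\{v\}$, and summing yields $d_{G-v}(X)\ge 2k$. The sufficiency direction, however, is a programme rather than a proof, and the step you yourself flag as ``the main obstacle'' --- the admissibility lemma --- is precisely where the entire content of the theorem lives. It cannot be discharged by citing Mader or Lov\'asz: their splitting-off theorems preserve (global or local) edge-connectivity \emph{of the graph itself}, whereas you need to split off at $x$ while preserving $2k$-edge-connectivity of $G'-v$ \emph{simultaneously for every remaining $v$}. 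This is a family of degree functions $X\mapsto d_{G-v}(X)$ indexed by $v$, and dangerous sets for different $v$ do not uncross against each other in any standard way; asserting that ``a submodularity/uncrossing argument (applied simultaneously, or via a single auxiliary function, over all choices of $v$)'' handles this is naming the difficulty, not resolving it. A second unverified load-bearing step is the case $v=x$ at the end of the induction: after splitting off all pairs at $x$ and lifting each oriented split edge $u\to w$ back to $u\to x\to w$, the digraph $D-x$ is $D'$ with all the split arcs \emph{deleted}, so $k$-arc-connectivity of $D'$ (or of $D'-v$) does not transfer to $D-x$; moreover $G-x$ need not be Eulerian, so one cannot invoke the balanced-cut count there. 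What you would actually need is that $D-x$ is a \emph{well-balanced} orientation of the $2k$-edge-connected graph $G-x$, which is exactly the simultaneous well-balanced orientation theorem that Kir\'aly and Szigeti prove (building on Nash-Williams' odd-vertex pairing machinery) and from which Theorem~\ref{KSz} follows as a corollary. As written, your argument assumes the hard part in two places.
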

Corollary \ref{main_cor} and Theorem \ref{KSz} imply the following corollary which gives the claimed bound for $k=1$.
\begin{cor}
 Every simple $(12k+2,2k)$-connected graph $G$ has an orientation $D$ such that $D-v$ is $k$-arc-connected for all $v \in V$.
\end{cor}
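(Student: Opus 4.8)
The plan is to combine Corollary~\ref{main_cor} (to peel off many edge-disjoint $2$-connected spanning subgraphs plus a spare spanning tree) with Theorem~\ref{KSz} (to orient a well-chosen \emph{Eulerian} spanning subgraph), and then to extend the orientation to all of $G$. So it suffices to produce a spanning subgraph $H'\subseteq G$ that is Eulerian and such that $H'-v$ is $2k$-edge-connected for every $v\in V$.

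First I would observe that $(12k+2,2k)$-connectivity implies $(12k+2,4k)$-connectivity, since $12k+2-2k|X|\geq 12k+2-4k|X|$ for every vertex set $X$. As $12k+2=6(2k)+2\cdot 1$ and $4k=2(2k)$, Corollary~\ref{main_cor}, applied with $2k$ in place of $k$ and $1$ in place of $\ell$, yields $2k$ pairwise edge-disjoint $2$-connected spanning subgraphs $H_1,\dots,H_{2k}$ of $G$ together with one more connected spanning subgraph edge-disjoint from all of them; fix a spanning tree $S$ inside that last subgraph. Put $H=H_1\cup\cdots\cup H_{2k}$. For each $v\in V$ and each $i$, the graph $H_i-v$ is connected and spans $V\setminus\{v\}$ (deleting one vertex from a $2$-connected graph leaves it connected), so for any bipartition of $V\setminus\{v\}$ into two nonempty parts, each $H_i-v$ contributes at least one crossing edge, and these $2k$ edges are distinct; hence $H-v$ is $2k$-edge-connected.

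It remains to correct the parity of $H$. Let $O$ be the set of vertices of odd degree in $H$; then $|O|$ is even, so $S$ has a (unique) $O$-join $J\subseteq E(S)$, i.e.\ $\deg_J(v)$ is odd exactly when $v\in O$. Then $H'=H\cup J$ is a spanning subgraph of $G$ in which every degree is even, and $H'$ is connected because it contains $S$, so $H'$ is Eulerian; moreover $H'-v\supseteq H-v$ is still $2k$-edge-connected for every $v\in V$. Theorem~\ref{KSz} then gives an orientation $D'$ of $H'$ such that $D'-v$ is $k$-arc-connected for all $v\in V$. Extending $D'$ to an orientation $D$ of $G$ by orienting the leftover edges arbitrarily only adds arcs to each $D'-v$, so $D-v$ remains $k$-arc-connected for all $v$, as required.

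I expect the parity-correction step to be the point that needs care: one must make the auxiliary graph Eulerian without destroying the property that deleting any single vertex keeps it $2k$-edge-connected. This goes through cleanly precisely because we only ever \emph{add} edges (which cannot decrease edge-connectivity) and because a single spare spanning tree already carries a parity-correcting join. The remaining items—the parity count for $|O|$, the triviality of the small-$|V|$ cases (ruled out since $(12k+2,2k)$-connectivity forces minimum degree at least $12k+2$), and the orientation extension—are routine.
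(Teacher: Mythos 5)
Your proof is correct and follows essentially the same route as the paper's: invoke Corollary~\ref{main_cor} with parameters $(2k,1)$ (after noting that $(12k+2,2k)$-connectivity implies $(12k+2,4k)$-connectivity) to obtain $2k$ edge-disjoint $2$-connected spanning subgraphs and a spare connected spanning subgraph, form the union $H$, correct parity with a $T$-join drawn from the spare spanning tree, and then apply Theorem~\ref{KSz}. The only differences are that you make explicit a few steps the paper leaves implicit (the inequality between the two connectivity parameters, the verification that $H-v$ is $2k$-edge-connected, and the extension of the orientation from $H'$ to all of $G$), which is fine.
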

\begin{proof} Let $G=(V,E)$ be a simple $(12k+2,2k)$-connected graph. By Theorem \ref{KSz} it suffices to prove that $G$ contains an Eulerian spanning subgraph $H$ such that $H−v$ is $2k$-edge-connected for all $v \in V$. By Corollary \ref{main_cor}, $G$ contains $2k$ $2$-connected spanning subgraphs $H_i=(V,E_i), i=1,\dots,2k$ and a spanning tree $F$ pairwise edge-disjoint. Define $H'=(V, \cup_{i=1}^{2k} E_i)$. For all $i=1,\dots,2k$, since $H_i$ is $2$-connected, $H_i-v$ is connected; hence $H'-v$ is $2k$-edge-connected for all $v \in V$. Denote $T$ the set of vertices of odd degree in $H'$.
We say that $F'$ is a {\boldmath $T$}{\bf -join} if the set of odd degree vertices of $G_{F'}$ coincides with $T.$
It is well-known  that the connected graph $F$ contains a $T$-join. Thus adding the edges of this $T$-join to $H'$ provides the required spanning subgraph of $G$.\end{proof}
\medskip

Finally we mention that the following conjecture of Frank, that would give a necessary and sufficient condition for a graph to have a $2$-connected orientation, would imply that $g(2)\leq 4.$

\begin{conj}[Frank \cite{Frank_1995}]
A graph has a $2$-connected orientation if and only if it is $(4,2)$-connected.
\end{conj}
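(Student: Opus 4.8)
I would start with the easy necessity direction. Let $D$ be a $2$-connected orientation of $G=(V,E)$; then $|V|\ge 3$, and $D$ as well as every $D-x$ is strongly connected. A $2$-connected digraph is $2$-arc-connected (the directed analogue of $\kappa\le\lambda$, easily checked directly in this small case), so $d_D^+(Y)\ge 2$ and $d_D^-(Y)\ge 2$ for every nonempty proper $Y\subseteq V$, whence $d_G(Y)=d_D^+(Y)+d_D^-(Y)\ge 4$. Moreover, for each $x\in V$, strong connectivity of $D-x$ gives $d^+_{D-x}(Y)\ge 1$ and $d^-_{D-x}(Y)\ge 1$, hence $d_{G-x}(Y)\ge 2$, for every nonempty $Y\subseteq V-x$ with $Y\cup\{x\}\ne V$. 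For $|X|\ge 2$ the required inequality $d_{G-X}(Y)\ge 4-2|X|$ is vacuous, its right-hand side being non-positive. Hence $G$ is $(4,2)$-connected.

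For the sufficiency direction --- the substantive part --- the natural plan is an induction by edge splitting-off, in the spirit of the orientation theorems of Nash-Williams and Mader. The base case would be the smallest $(4,2)$-connected graphs (on a few vertices), for which a $2$-connected orientation is exhibited by hand. For the inductive step, pick a vertex $v$; since $d_G(v)\ge 4$ by $(4,2)$-connectivity, one seeks an \emph{admissible splitting} at $v$ --- a pair of edges $uv,vw$ whose replacement by a single edge $uw$ keeps the resulting graph $(4,2)$-connected --- and performs such splittings until $v$ becomes isolated (and is deleted, reducing $|V|$) or the graph is otherwise strictly smaller. Apply the induction hypothesis to obtain a $2$-connected orientation $D'$ of the reduced graph, then undo each split: if $uw$ is oriented $u\to w$ in $D'$, reroute it through $v$ as the pair $u\to v$ and $v\to w$. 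Finally one checks that the reconstructed orientation $D$ of $G$ is $2$-connected --- inserting $v$ along a directed path preserves strong connectivity of $D$ and of each $D-x$ with $x\ne v$, while $D-v$ agrees with a subdigraph of the strongly connected $D'$.

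The hard part --- and, I expect, the reason this statement is still only a conjecture --- is proving that an admissible splitting always exists. The classical splitting-off theorems of Lov\'asz and Mader handle (local) edge-connectivity requirements, but $(4,2)$-connectivity couples global $4$-edge-connectivity with the vertex-deletion conditions ``$G-x$ is $2$-edge-connected'', and no splitting-off theorem is known for a requirement of this mixed type; an admissible split must simultaneously avoid creating a thin edge cut, a thin cut in some $G-x$, and a cut vertex in the underlying graph. A second, lesser difficulty is the unsplitting step: an arbitrary $2$-connected orientation of the reduced graph need not let $uw$ be rerouted through $v$ with every $D-x$ remaining strongly connected, so one probably has to strengthen the induction to produce $D'$ with extra control (over the direction of $uw$, or over in-degrees and out-degrees). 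Resolving the first point --- or else recasting the problem in a submodular-flow or matroid framework capable of encoding the $G-x$ constraints --- is where a proof must do its real work; the packing-based route of this paper, which currently needs $(14,2)$-connectivity, would have to be sharpened dramatically to reach $(4,2)$.
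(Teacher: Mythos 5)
This statement is labelled a conjecture in the paper, attributed to Frank, and the paper offers no proof of it; it is quoted only to note that it would imply $g(2)\le 4$. So there is no ``paper's own proof'' to compare against, and you are right to treat the two directions asymmetrically. Your necessity argument is correct and complete: the digraph analogue of Whitney's inequality $\kappa\le\lambda$ gives $d_D^+(Y),d_D^-(Y)\ge 2$ for every nonempty proper $Y$, hence $d_G(Y)\ge 4$; strong connectivity of each $D-x$ gives $d_{G-x}(Y)\ge 2$; and the cases $|X|\ge 2$ are vacuous under the paper's definition of $(4,2)$-connectivity. That half is a genuine (and standard) proof.

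The sufficiency direction, however, is only a programme, and you say so yourself: the entire argument hinges on an admissible-splitting lemma for the mixed condition ``$4$-edge-connected and $G-x$ is $2$-edge-connected for all $x$'', and no such splitting-off theorem is known --- the classical Lov\'asz and Mader results preserve only (local) edge-connectivity, not the vertex-deleted constraints. Your second worry is also real: even granting the splitting, rerouting $uw$ through $v$ requires control over the orientation of $uw$ in $D'$ that a bare induction hypothesis does not supply, so the induction would need strengthening. Since these are exactly the obstructions that keep the statement a conjecture (only partial results are known, e.g.\ for Eulerian graphs via Berg and Jord\'an, and the general case was resolved only much later and by different methods), your proposal should be read as a correct proof of necessity plus an honest identification of why sufficiency is open, not as a proof of the conjecture.
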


\section{Proofs}

To prove Theorem \ref{main_th} we need to introduce two other matroids on the edge set $E$ of $G$.
Denote {\boldmath $\mathcal{C}(G)$} the \textbf{circuit matroid} of $G$ on ground-set $E$ with rank function {\boldmath $r_{\mathcal{C}}$} given by (\ref{rc}). Let $n$ be the number of vertices in $G$, that is {\boldmath $n=|V|$}. For $F \subseteq E$, denote {\boldmath $c(G_F)$} the number of connected components of $G_F$, it is well known that,
\begin{equation} \label{rc}
r_{\mathcal{C}}(F) = n - c(G_F).
\end{equation}

To have $k$ rigid spanning subgraphs and $\ell$ spanning trees pairwise edge-disjoint in $G$, we must find $k$ basis in $\mathcal{R}(G)$ and $\ell$ basis in $\mathcal{C}(G)$ pairwise disjoint. To do that we will need the following matroid.
For $k \geq 1$ and $\ell \geq 0$, define {\boldmath $\mathcal{M}_{k,\ell}(G)$} as the matroid on ground-set $E$, obtained by taking the matroid union of $k$ copies of the rigidity matroid $\mathcal{R}(G)$ and $\ell$ copies of the circuit matroid $\mathcal{C}(G)$. Let {\boldmath $r_{\mathcal{M}_{k,\ell}}$} be the rank function of $\mathcal{M}_{k,\ell}(G)$. 
By a theorem of Edmonds \cite{Edmonds_1968}, for the rank of matroid unions, 
\begin{equation} \label{rs}
r_{\mathcal{M}_{k,\ell}}(E) = \min_{F \subseteq E} kr_{\mathcal{R}}(F) +\ell r_{\mathcal{C}}(F) + |E \setminus F|.
\end{equation}

In \cite{Jordan_2005}, Jord\'an used the matroid $\mathcal{M}_{k,0}(G)$ to prove Theorem \ref{Jordan_main} and pointed out that using $\mathcal{M}_{k,\ell}(G)$ one could prove a theorem on packing of rigid spanning subgraphs and spanning trees. We tried to fulfill this gap by following the proof of \cite{Jordan_2005} but we failed. To achieve this aim we had to find a new proof technique.
Let us first demonstrate this technique by giving a transparent proof for Theorems \ref{Lo_Ye} and \ref{Ja_Jo}.

\begin{proof}[Proof of Theorem \ref{Lo_Ye}]  By (\ref{rr}), there exists a collection $\G$ of subsets of $V$ such that $\{E(X), X \in \G\}$ partitions $E$ and $r_{\mathcal{R}}(E) = \sum_{X \in \G} (2|X| - 3)$. If $V \in \G$ then $r_{\mathcal{R}}(E) \geq 2|V| - 3$ hence $G$ is rigid. So in the following we may assume that $V \notin \G$.

Let $\H=\{X\in \G : |X|\geq 3\}$ and $F=\bigcup _{X\in \H}E(X)$. We define, for $X \in \H$, the border of $X$ as $X_B = X \cap (\cup_{Y \in \H-X} Y)$ and the proper part of $X$ as $X_I = X \setminus X_B$ and $\H' = \{X \in \H \textrm{ : } X_I \neq \emptyset\}$.
\smallskip

Since every edge of $F$ is induced by an element of $\H$, for $X \in \H'$, by definition of $X_I$, no edge of $F$ contributes to $d_{G-X_B}(X_I)$; and for a vertex $v\in V-V(\H)$, no edge of $F$ contributes to $d_G(v)$.
Thus, since for $X \in \H'$, $X_I \neq \emptyset$ and $X_I \cup X_B = X \neq V$, by $6$-connectivity of $G$, we have
$|E\setminus F|\geq\frac{1}{2}(\sum_{X \in \H'} d_{G-X_B}(X_I) + \sum_{v \in  V-V(\H)}d_G(v))\geq\frac{1}{2}(\sum_{X \in \H'} (6-|X_B|)+ \sum_{v \in  V-V(\H)}6) \geq 3|\H'|-\sum_{X \in \H'} |X_B|+3(|V|-|V(\H)|).$
\medskip

Since for  $X\in\H\setminus \H'$, $|X_B|=|X|\geq 3,$ we have $\sum_{X \in \H} (2|X| - 3)=\sum_{X \in \H} 2|X| -3|\H|+ 3|\H'| - 3|\H'|\geq\sum_{X \in \H} 2|X| -\sum_{X \in \H\setminus \H'} |X_B| - 3|\H'|.$
\medskip

Since $G$ is simple, by Remark \ref{Xconnected} every $X \in \G$ of size $2$ induces exactly one edge.
Hence, by the above inequalities, we have
$\sum_{X \in \G} (2|X| - 3)=\sum_{X \in \H} (2|X| - 3)+|E\setminus F|
\geq \sum_{X \in \H} 2|X|-\sum_{X \in \H} |X_B|+3(|V|-|V(\H)|)
=(\sum_{X \in \H} 2|X_I| +\sum_{X \in \H} |X_B|-2|V(\H)|) + (|V|-|V(\H)|)+2|V|
\geq  2|V|.$
\medskip

To see the last inequality, let $v \in V(\H)$. Then $v\in V$ and hence $n-|V(\H)|\geq 0.$ If $v$ belongs to exactly one $X' \in \H$, then $v\in X'_I$; so $v$ contributes $2$ in $\sum_{X \in \H} 2|X_I|$. If $v$ belongs to at least two $X', X'' \in \H$, then $v\in X'_B$ and $v\in X''_B;$ so $v$ contributes at least $2$ in $\sum_{X \in \H} |X_B|$ and hence $\sum_{X \in \H} 2|X_I| +\sum_{X \in \H} |X_B|-2|V(\H)|\geq 0.$

Hence $2|V| - 3 \geq r_{\mathcal{R}}(E) \geq 2|V|$, a contradiction.\end{proof}

\begin{proof}[Proof of Theorem \ref{Ja_Jo}] Note that in the lower bound on $|E\setminus F|$, $d_{G-X_B}(X_I)\geq 6-|X_B|$ can be replaced by $d_{G-X_B}(X_I)\geq 6-2|X_B|$, and the same proof works. This means that instead of $6$-connectivity, we used in fact $(6,2)$-connectivity. 
\end{proof}

\begin{proof}[Proof of Theorem \ref{main_th}] Suppose that there exist integers $k,\ell$ and a graph $G=(V,E)$ contradicting the theorem. We use the matroid $\mathcal{M}_{k,\ell}$ defined above.
 Choose $F$ a smallest-size set of edges that minimizes the right hand side of (\ref{rs}). By (\ref{rr}), we can define $\H$ a collection of subsets of $V$ such that $\{E(X) \cap F, X \in \H\}$ partitions $F$ and $r_{\mathcal{R}}(F) = \sum_{X \in \H} (2|X| - 3).$
Since $G$ is a counterexample and by (\ref{rr}) and (\ref{rc}),
\begin{equation} \label{main_ineq}
 k(2n-3) + \ell(n-1) > r_{\mathcal{M}_{k,\ell}}(E) = k\sum_{X \in \H} (2|X| - 3) +\ell(n-c(G_F)) + |E \setminus  F|.
\end{equation}
By $k \geq 1$, $G$ is connected, thus, by (\ref{main_ineq}), $V \notin \H$. 
Recall the notations, for $X \in \H$,  $X_B = X \cap (\cup_{Y \in \H-X} Y)$ and $X_I = X \setminus X_B$ and
the definition $\H' = \{X \in \H \textrm{ : } X_I \neq \emptyset\}$.
Denote $\K$ the set of connected components of $G_F$ intersecting no set of $\H'$.
By Remark \ref{Xconnected}, for $X \in \H'$, $X$ induces a connected subgraph of $G_F$, thus a connected component of $G_F$ intersecting $X \in \H'$ contains $X$ and is the only connected component of $G_F$ containing $X$. So by definition of $\K$,
\begin{equation} \label{connect3}
|\H'| \geq c(G_F) - |\K| . 
\end{equation}

Let us first show a lower bound on $|E \setminus F|.$

\begin{claim} \label{otheredges}
 $|E \setminus F|\geq k \bigg(3|\H'| -\sum_{X \in \H'} |X_B| +3|\K|\bigg) + \ell c(G_F).$
\end{claim}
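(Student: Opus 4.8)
The plan is to mimic the edge-counting argument from the proof of Theorem~\ref{Lo_Ye}, but now exploiting the full strength of $(6k+2\ell,2k)$-connectivity and accounting for the $\ell$ spanning trees via the components collected in $\K$. First I would observe that each edge of $F$ lies in some $E(X)$ with $X\in\H$, so for $X\in\H'$ no edge of $F$ contributes to $d_{G-X_B}(X_I)$, and for a component $C\in\K$ no edge of $F$ leaves $C$ (since $F$-edges inside $C$ stay inside $C$). Thus the cuts $d_{G-X_B}(X_I)$ for $X\in\H'$ and $d_G(V(C))$ for $C\in\K$ are all realised entirely by edges of $E\setminus F$.

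Next I would bound each of these cuts from below using $(6k+2\ell,2k)$-connectivity, i.e. inequality~(1): for $X\in\H'$, since $X_I\neq\emptyset$ and $X_I\cup X_B=X\neq V$, we get $d_{G-X_B}(X_I)\geq (6k+2\ell)-2k|X_B|$; for $C\in\K$, taking $X=\emptyset$ and $Y=V(C)$ (which is nonempty and, by $V\notin\H$ together with connectivity of $G$, satisfies $V(C)\neq V$ unless $\K$ is a single component equal to $V$, a degenerate case I would handle separately or rule out), we get $d_G(V(C))\geq 6k+2\ell$. Then I would sum these lower bounds. The key point is that these cuts are edge-disjoint enough that summing and dividing by $2$ is legitimate: the sets $X_I$ ($X\in\H'$) and $V(C)$ ($C\in\K$) are pairwise disjoint subsets of $V$ — the $X_I$'s are disjoint by definition of the border, the $V(C)$'s are disjoint as distinct components, and a component in $\K$ meets no $X\in\H'$ hence is disjoint from every $X_I$ — so an edge of $E\setminus F$ is counted at most twice in $\sum_{X\in\H'} d_{G-X_B}(X_I)+\sum_{C\in\K} d_G(V(C))$.

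Putting this together, I would write
\[
|E\setminus F|\;\geq\;\tfrac12\Bigl(\sum_{X\in\H'}\bigl((6k+2\ell)-2k|X_B|\bigr)+\sum_{C\in\K}(6k+2\ell)\Bigr)
\;=\;\tfrac12\bigl((6k+2\ell)(|\H'|+|\K|)-2k\textstyle\sum_{X\in\H'}|X_B|\bigr),
\]
and then split this as $k\bigl(3|\H'|-\sum_{X\in\H'}|X_B|+3|\K|\bigr)+\ell(|\H'|+|\K|)$. Finally I would invoke~(\ref{connect3}), namely $|\H'|+|\K|\geq c(G_F)$, to replace $\ell(|\H'|+|\K|)$ by the weaker $\ell\,c(G_F)$, which yields exactly the claimed bound. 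The main obstacle I anticipate is the bookkeeping around which cuts are genuinely controlled by $E\setminus F$ and the verification that no $V(C)$ or $X_I$ equals $V$ so that~(1) applies — in particular the edge case where $G_F$ is connected and $\H'=\emptyset$, where one must be careful that $\K$ either is empty or consists of the single component $V$; but since $V\notin\H$ this component is a proper subset only if there is at least one vertex outside, and if $G_F$ has the single component $V$ then $c(G_F)=1$ and (\ref{connect3}) forces $|\K|\geq 0$ consistently, so the inequality degenerates harmlessly. The rest is the routine arithmetic of regrouping the coefficients of $k$ and $\ell$.
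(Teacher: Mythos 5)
Your proof is correct and follows essentially the same route as the paper: lower-bound the cuts $d_{G-X_B}(X_I)$ and $d_G(K)$ via $(6k+2\ell,2k)$-connectivity, observe that only edges of $E\setminus F$ cross these cuts, sum and halve, and finish with (\ref{connect3}). The explicit disjointness justification for the factor $\tfrac12$ and the worry about the (in context vacuous) case $V\in\K$ are extra care the paper leaves implicit, but they do not change the argument.
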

\begin{proof} For $X \in \H'$, $X_I \neq \emptyset$ and $X_I \cup X_B = X \neq V$. Thus by $(6k+2\ell,2k)$-connectivity of $G$,
for $X \in \H'$ and for $K \in \K$, 
\begin{eqnarray}
d_{G-X_B}(X_I) &\geq &(6k+2\ell) - 2k|X_B|, \label{connect1}\\
d_G(K) &\geq &6k+2\ell. \label{connect2}
\end{eqnarray}
Since every edge of $F$ is induced by an element of $\H$ and by definition of $X_I$, for $X \in \H'$, no edge of $F$ contributes to $d_{G-X_B}(X_I)$. Each $K \in \K$ is a connected component of the graph $G_F$, thus no edge of $F$ contributes to $d_G(K)$.
Hence, by (\ref{connect1}), (\ref{connect2}), (\ref{connect3}) and $\ell \geq 0$, we obtain the required lower bound on $|E \setminus F|$,
\begin{eqnarray*}
 |E \setminus F| & \geq & \frac{1}{2}\bigg( \sum_{X \in \H'} d_{G-X_B}(X_I) + \sum_{K \in \K}d_G(K)\bigg) \\
    {} & \geq & \frac{1}{2}\bigg((6k+2\ell)|\H'| - 2k\sum_{X \in \H'} |X_B| + (6k+2\ell)|\K|\bigg) \\
    {} & \geq & k \bigg(3|\H'| -\sum_{X \in \H'} |X_B| +3|\K|\bigg) + \ell\bigg(|\H'| + |\K|\bigg)\\
    {} & \geq & k \bigg(3|\H'| -\sum_{X \in \H'} |X_B| +3|\K|\bigg) + \ell c(G_F).
\end{eqnarray*}
\vspace{-1.3cm}\[\qedhere\]
\end{proof}
\medskip

\begin{claim} \label{3c}
  $\sum_{X \in \H \setminus \H'}|X_B| \geq 3(|\H| - |\H'|)$.
\end{claim}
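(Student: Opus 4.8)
The plan is to reduce Claim \ref{3c} to the statement that every $X \in \H \setminus \H'$ has at least three vertices. Indeed, if $X \in \H \setminus \H'$ then $X_I = \emptyset$, so $X = X_B$ and $|X_B| = |X|$; since $\H' \subseteq \H$ we have $|\H \setminus \H'| = |\H| - |\H'|$, and the inequality of the claim becomes $\sum_{X \in \H \setminus \H'}|X| \geq 3|\H \setminus \H'|$, which holds once each such $X$ has $|X| \geq 3$. Moreover, since $\{E(X) \cap F : X \in \H\}$ is a partition of $F$, every block is nonempty, so each $X \in \H$ induces an edge of $F$ and hence $|X| \geq 2$; thus it only remains to exclude $|X| = 2$.

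So I would suppose for contradiction that some $X = \{u,v\} \in \H$ lies in $\H \setminus \H'$. Because $G$ is simple and $E(X) \cap F$ is a nonempty block, $E(X) \cap F = \{uv\}$ with $uv \in F$. Set $F' = F \setminus \{uv\}$ and estimate the three terms of the right-hand side of (\ref{rs}). The collection $\H \setminus \{X\}$ is still admissible for $F'$ in (\ref{rr}): the blocks $E(Y) \cap F$ with $Y \neq X$ are unaffected by deleting $uv$ and partition $F'$. Hence $r_{\mathcal{R}}(F') \leq r_{\mathcal{R}}(F) - 1$. Also $|E \setminus F'| = |E \setminus F| + 1$, and $r_{\mathcal{C}}(F') = n - c(G_{F'})$ equals $r_{\mathcal{C}}(F)$ unless $uv$ is a bridge of $G_F$, in which case it equals $r_{\mathcal{C}}(F) - 1$. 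Substituting, and using $\ell \geq 0$, the value of (\ref{rs}) at $F'$ is at most the value at $F$ minus $k - 1$, and in fact minus $k + \ell - 1$ if $\ell \geq 1$ and $uv$ is a bridge of $G_F$. Since $k \geq 1$, this value is never larger than at $F$; it is strictly smaller when $k \geq 2$, or when $\ell \geq 1$ and $uv$ is a bridge, contradicting that $F$ minimizes (\ref{rs}). In the only remaining cases — $k = 1$ with $\ell = 0$, or $k = 1$ with $uv$ not a bridge — the estimate gives that the value at $F'$ is at most that at $F$, hence equal since $F$ is a minimizer, so $F'$ also minimizes (\ref{rs}) while $|F'| < |F|$, contradicting the smallest-size choice of $F$. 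Either way we reach a contradiction, so no two-element member of $\H$ lies in $\H \setminus \H'$.

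Putting this together, every $X \in \H \setminus \H'$ has $|X| \geq 3$, whence $\sum_{X \in \H \setminus \H'}|X_B| = \sum_{X \in \H \setminus \H'}|X| \geq 3|\H \setminus \H'| = 3(|\H| - |\H'|)$, which is Claim \ref{3c}. The genuinely delicate step — everything else being direct unwinding of the definitions of $X_B$, $X_I$, $\H'$ and of the partition property — is the case analysis in the second paragraph: one has to track precisely how deleting a single edge inside a two-element member of $\H$ changes $r_{\mathcal{R}}$, $r_{\mathcal{C}}$ and $|E \setminus F|$, and separate the subcases where the value of (\ref{rs}) strictly decreases (contradicting the minimization) from those where it only stays the same (where one instead invokes that $F$ has smallest size). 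This is also the only point where simplicity of $G$ and the minimal choice of $F$ are actually needed.
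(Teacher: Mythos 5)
Your proof is correct and takes essentially the same approach as the paper: both reduce the claim to showing that no size-two set belongs to $\H$, then rule that out by deleting the unique induced edge from $F$ and contradicting the smallest-size minimal choice of $F$ via (\ref{rs}). The only difference is cosmetic: the paper packages the conclusion as a single inequality chain $0 < \cdots \le -k + 0 + 1 \le 0$, whereas you split into cases according to whether the value at $F'$ strictly drops or merely stays equal, and these two framings are logically equivalent.
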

\begin{proof}
 By definition of $\H'$, $X_B=X$ for all $X \in \H \setminus \H'$. So to prove the claim it suffices to show that every $X \in \H$ satisfies $|X| \geq 3$.
 Suppose there exists $Y \in \H$ such that $|Y|=2$. By Remark \ref{Xconnected} and since $G$ is simple, $Y$ induces exactly one edge $e$. Define $F'' = F - e$ and $\H'' = \H - Y$. Note that $\{E(X)\cap F'', X \in \H''\}$ partitions $F''$, hence by (\ref{rr}) and the choice of  $\H$, 
  \begin{equation} \label{rin}
 r_{\mathcal{R}}(F'') \leq \sum_{X \in \H''} (2|X| - 3)= r_{\mathcal{R}}(F)-(2|Y|-3) = r_{\mathcal{R}}(F)-1.
 \end{equation}
  Note also that $c(G_{F''}) \geq c(G_F)$, thus by (\ref{rc}) and $\ell \geq 0$, 
 \begin{equation} \label{lin}
 \ell r_{\mathcal{C}}(F'') \leq \ell r_{\mathcal{C}}(F).
 \end{equation}
Since $|F''| < |F|$, the choice of $F$ implies that $F''$ doesn't minimizes the right hand side of (\ref{rs}). Hence by (\ref{rin}), (\ref{lin}), the definition of $F''$, $|Y|=2$, and $k \geq 1$, we have the following contradiction:
\begin{eqnarray*}
0 &<& \bigg(kr_{\mathcal{R}}(F'') +\ell r_{\mathcal{C}}(F'') + |E \setminus F''|\bigg) - \bigg(kr_{\mathcal{R}}(F) +\ell r_{\mathcal{C}}(F) + |E \setminus F|\bigg) \\
&=&k\bigg(r_{\mathcal{R}}(F'')-r_{\mathcal{R}}(F)\bigg) +\ell \bigg(r_{\mathcal{C}}(F'')-r_{\mathcal{C}}(F)\bigg)+ \bigg(|E \setminus F''| -  |E \setminus F|)\bigg) \\
&\leq &-k + 0 + |\{e\}| \\
&\leq &0.
\end{eqnarray*}
\vspace{-1.3cm}\[\qedhere\]
\end{proof}\medskip

To finish the proof we show the following inequality with a simple counting argument.

\begin{claim} \label{counting}
 $2|\K| + \sum_{X \in \H} 2|X_I| +\sum_{X \in \H} |X_B| \geq 2n$.
\end{claim}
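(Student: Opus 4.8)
The plan is to prove the inequality by a vertex-counting argument, distinguishing the vertices that lie in $V(\H):=\bigcup_{X\in\H}X$ from those that do not. First I would observe that, since $\{E(X)\cap F : X\in\H\}$ partitions $F$, every edge of $F$ is induced by some $X\in\H$ and hence has both endpoints in $V(\H)$. Consequently every vertex $v\in V\setminus V(\H)$ is isolated in $G_F$, so $\{v\}$ is by itself a connected component of $G_F$; moreover $\{v\}$ meets no member of $\H'$ (because $\H'\subseteq\H$ and $v\notin V(\H)$), so $\{v\}\in\K$. The assignment $v\mapsto\{v\}$ is therefore an injection of $V\setminus V(\H)$ into $\K$, which yields
\[
|\K| \geq n-|V(\H)|.
\]

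Next I would bound the contribution of the vertices of $V(\H)$, reusing the double counting already carried out at the end of the proof of Theorem~\ref{Lo_Ye}. If a vertex $v$ lies in exactly one set $X\in\H$, then $v\in X_I$, so $v$ contributes $2$ to $\sum_{X\in\H}2|X_I|$. If $v$ lies in at least two sets $X',X''\in\H$, then $v\in X'_B$ and $v\in X''_B$, so $v$ contributes at least $2$ to $\sum_{X\in\H}|X_B|$. Summing over all $v\in V(\H)$ gives
\[
\sum_{X\in\H}2|X_I| + \sum_{X\in\H}|X_B| \geq 2|V(\H)|.
\]

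Adding the two displayed inequalities,
\[
2|\K| + \sum_{X\in\H}2|X_I| + \sum_{X\in\H}|X_B| \;\geq\; 2\bigl(n-|V(\H)|\bigr) + 2|V(\H)| \;=\; 2n,
\]
which is exactly the claim. I do not expect any real obstacle here: the only step that is not purely mechanical is the remark that a vertex outside $V(\H)$ is not merely outside the border structure of $\H$ but is genuinely an isolated vertex of $G_F$, so its singleton component automatically lands in $\K$; once that is seen, the two counts fit together and close the estimate, the $V(\H)$ part being identical to the Lovász–Yemini argument.
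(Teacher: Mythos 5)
Your proof is correct and follows the same vertex-counting argument as the paper: classify each $v\in V$ according to whether it lies in zero, exactly one, or at least two members of $\H$, and observe that it contributes at least $2$ to the left-hand side in each case. The only cosmetic difference is that you split the tally into $V\setminus V(\H)$ and $V(\H)$ and add the two bounds, and you spell out slightly more explicitly why a vertex outside $V(\H)$ yields a singleton component that belongs to $\K$; the substance is identical.
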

\begin{proof} Let $v \in V$. If $v$ belongs to no $X \in \H$, then $\{v\} \in \K$ and $v$ contributes $2$ in $2|\K|$. If $v$ belongs to exactly one $X' \in \H$, then $v\in X'_I$ and $v$ contributes $2$ in $\sum_{X \in \H} 2|X_I|$. If $v$ belongs to at least two $X', X'' \in \H$, then $v\in X'_B, v\in X''_B$ and $v$ contributes at least $2$ in $\sum_{X \in \H} |X_B|$. The claim follows. 
\end{proof}

Thus we get, by Claims \ref{otheredges}, \ref{3c} and \ref{counting},
\begin{align*}
 &k\sum_{X \in \H} (2|X| - 3) + |E \setminus F|+\ell (n-c(G_F))\\
    &\geq k \sum_{X \in \H} 2|X| - 3k|\H| + k \bigg(3|\H'| -\sum_{X \in \H'} |X_B| +3|\K|\bigg) + \ell c(G_F) + \ell (n -c(G_F)) \\
    &\geq  k \bigg(\sum_{X \in \H} 2|X| -3|\H| + 3|\H'| -\sum_{X \in \H'} |X_B| +3|\K|\bigg) + \ell n\\
    & \geq  k \bigg(\sum_{X \in \H} 2|X| -\sum_{X \in \H} |X_B| +2|\K|\bigg) + \ell n \\
    & \geq  k \bigg(2|\K| + \sum_{X \in \H} 2|X_I| +\sum_{X \in \H} |X_B|\bigg) + \ell n\\
    & \geq  2kn + \ell n.
\end{align*}
By $k \geq 1$ and $\ell \geq 0$, this contradicts (\ref{main_ineq}). \end{proof}

Remark that the proof actually shows that if $G$ is simple and $(6k+2\ell,2k)$-connected and if $F \subseteq E$ is such that $|F| \leq 3k + \ell$, then $G'=(V,E \setminus F)$ contains $k$ rigid spanning subgraphs and $\ell$ spanning trees pairwise edge disjoint.\\

\bibliographystyle{plain}

\end{document}